\algnewcommand\algorithmicinput{\textbf{Parameters:}}
\algnewcommand\Params{\item[\algorithmicinput]}
\newtheorem{theorem}{Theorem}[section]
\newcommand\vldbdoi{10.14778/3796195.3796215}
\newcommand\vldbpages{1060 - 1073}
\newcommand\vldbvolume{19}
\newcommand\vldbissue{5}
\newcommand\vldbyear{2026}
\newcommand\vldbauthors{\authors}
\newcommand\vldbtitle{\shorttitle} 
\newcommand\vldbavailabilityurl{https://github.com/mitdbg/palimpzest}
\newcommand\vldbpagestyle{empty} 
\begin{document}
\title{Abacus: A Cost-Based Optimizer for Semantic Operator Systems}

%%
%% The "author" command and its associated commands are used to define the authors and their affiliations.

\author{Matthew Russo}
\orcid{0009-0005-9685-3976}
\affiliation{%
  \institution{MIT}
  % \city{Cambridge}
  % \state{MA}
  % \country{United States}
}
\email{mdrusso@csail.mit.edu}

\author{Chunwei Liu}
\affiliation{%
  \institution{MIT}
  % \city{Cambridge}
  % \state{MA}
  % \country{United States}
}
\email{chunwei@csail.mit.edu}

\author{Sivaprasad Sudhir}
\affiliation{%
  \institution{MIT}
  % \city{Cambridge}
  % \state{MA}
  % \country{United States}
}
\email{siva@csail.mit.edu}

\author{Gerardo Vitagliano}
\affiliation{%
  \institution{MIT}
  % \city{Cambridge}
  % \state{MA}
  % \country{United States}
}
\email{gerarvit@mit.edu}

\author{Michael Cafarella}
\affiliation{%
  \institution{MIT}
  % \city{Cambridge}
  % \state{MA}
  % \country{United States}
}
\email{michjc@csail.mit.edu}

\author{Tim Kraska}
\affiliation{%
  \institution{MIT}
  % \city{Cambridge}
  % \state{MA}
  % \country{United States}
}
\email{kraska@mit.edu}

\author{Samuel Madden}
\affiliation{%
  \institution{MIT}
  % \city{Cambridge}
  % \state{MA}
  % \country{United States}
}
\email{madden@csail.mit.edu}

%%
%% By default, the full list of authors will be used in the page
%% headers. Often, this list is too long, and will overlap
%% other information printed in the page headers. This command allows
%% the author to define a more concise list
%% of authors' names for this purpose.
\renewcommand{\shortauthors}{Russo et al.}
\newcommand{\system}{{\sc Abacus}}
\renewcommand{\sectionautorefname}{Section}
\newcommand{\tim}[1]{\textcolor{red}{\bf (Tim)~#1}{\typeout{#1}}}
\newcommand{\matt}[1]{\textcolor{blue}{#1}}
\newcommand{\cut}[1]{\textcolor{red}{#1}}
\newcommand{\srm}[1]{\textcolor{purple}{\bf (Sam)~#1}{\typeout{#1}}}
\newcommand{\edit}[1]{\textcolor{blue}{#1}}

%%
%% The abstract is a short summary of the work to be presented in the
%% article.
\begin{abstract}
LLMs enable an exciting new class of data processing applications over large collections of unstructured documents. Several new programming frameworks have enabled developers to build these applications by composing them out of semantic operators: a declarative set of AI-powered data transformations with natural language specifications. These include LLM-powered maps, filters, joins, etc. used for document processing tasks such as information extraction, summarization, and more. While systems of semantic operators have achieved strong performance on benchmarks, they can be difficult to optimize. An optimizer for this setting must determine how to physically implement each semantic operator in a way that optimizes the system globally. Existing optimizers are limited in the number of optimizations they can apply, and most (if not all) cannot optimize system quality, cost, or latency subject to constraint(s) on the other dimensions. In this paper we present \system{}, an extensible, cost-based optimizer which searches for the best implementation of a semantic operator system given a (possibly constrained) optimization objective. \system{} estimates operator performance by leveraging a minimal set of validation examples, prior beliefs about operator performance, and/or an LLM judge. We evaluate \system{} on document processing workloads in the biomedical and legal domains (BioDEX; CUAD) and multi-modal question answering (MMQA). We demonstrate that, on-average, systems optimized by \system{} achieve 6.7\%-39.4\% better quality and are 10.8x cheaper and 3.4x faster than the next best system.
\end{abstract}

\maketitle

%%% do not modify the following VLDB block %%
%%% VLDB block start %%%
\pagestyle{\vldbpagestyle}
\begingroup\small\noindent\raggedright\textbf{PVLDB Reference Format:}\\
\vldbauthors. \vldbtitle. PVLDB, \vldbvolume(\vldbissue): \vldbpages, \vldbyear.\\
\href{https://doi.org/\vldbdoi}{doi:\vldbdoi}
\endgroup
\begingroup
\renewcommand\thefootnote{}\footnote{\noindent
This work is licensed under the Creative Commons BY-NC-ND 4.0 International License. Visit \url{https://creativecommons.org/licenses/by-nc-nd/4.0/} to view a copy of this license. For any use beyond those covered by this license, obtain permission by emailing \href{mailto:info@vldb.org}{info@vldb.org}. Copyright is held by the owner/author(s). Publication rights licensed to the VLDB Endowment. \\
\raggedright Proceedings of the VLDB Endowment, Vol. \vldbvolume, No. \vldbissue\ %
ISSN 2150-8097. \\
\href{https://doi.org/\vldbdoi}{doi:\vldbdoi} \\
}\addtocounter{footnote}{-1}\endgroup
%%% VLDB block end %%%

%%% do not modify the following VLDB block %%
%%% VLDB block start %%%
\ifdefempty{\vldbavailabilityurl}{}{
\vspace{.3cm}
\begingroup\small\noindent\raggedright\textbf{PVLDB Artifact Availability:}\\
The source code, data, and/or other artifacts have been made available at \url{\vldbavailabilityurl}.
\endgroup
}
%%% VLDB block end %%%

\section{Introduction}
\label{sec:intro}

\begin{figure*}
\includegraphics[width=.9\textwidth]{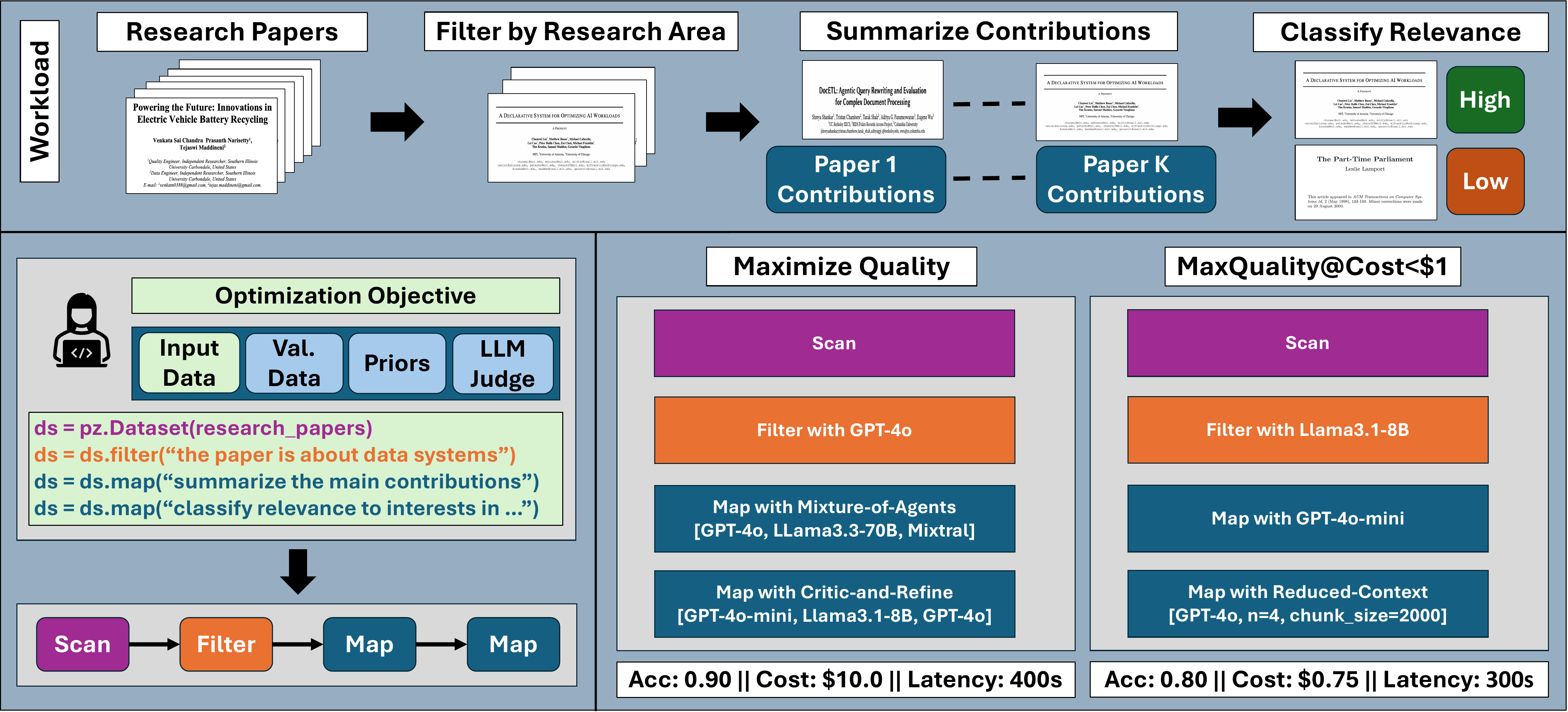}
\caption{An illustration of \system{} compiling a program for a literature search workload (Top) into two different physical plans for two different optimization objectives. (Left) the user implements the workload in a Palimpzest program and provides the input data they wish to process and (optionally) validation data, priors, and/or an LLM judge which \system{} may use to guide its optimization. The program is compiled to a logical plan, which \system{} seeks to implement with a (near-)optimal physical plan. (Center) given the unconstrained objective of maximizing quality, \system{} produces a physical plan which achieves high quality for this task. (Right) given the objective of maximizing quality subject to a constraint of \$1 in execution cost, \system{} produces a plan which satisfies the constraint while only trading-off a modest decrease in quality.}

\Description{An illustration of Abacus compiling a program for a literature search workload (Top) into two different physical plans for two different optimization objectives. (Left) the user implements the workload in a Palimpzest program and provides the input data they wish to process and (optionally) validation data which Abacus may use to guide its optimization. The program is compiled to a logical plan, which Abacus seeks to implement with a (near-)optimal physical plan. (Center) given the unconstrained objective of maximizing quality, Abacus produces a physical plan which achieves high quality for this task. (Right) given the objective of maximizing quality subject to a constraint of \$1 in execution cost, Abacus produces a plan which satisfies the constraint and is much cheaper than the unconstrained plan, while only trading-off a modest decrease in quality.}
\label{fig:example}
\end{figure*}

Industry and academia are increasingly using large language models (LLMs) to solve problems which require semantic understanding. These problems range from unstructured document processing \cite{doosterlinck2023biodex, hendrycks2021cuad}, to multi-modal question answering \cite{talmor2021multimodalqa, yue2023mmmu, song2024milebench}, to semantic search and ranking \cite{thakur2021beir}. In order to achieve state-of-the-art performance on these tasks, practitioners often decompose the problem into modular subtasks within an AI program.

Recently, programming frameworks including Palimpzest \cite{liu2025palimpzest}, LOTUS \cite{patel2025semanticoperatorsdeclarativemodel}, DocETL \cite{shankar2024docetlagenticqueryrewriting}, and others \cite{khattab2023dspycompilingdeclarativelanguage, lu2025vectraflow, saadfalcon2024archonarchitecturesearchframework, aryn2025, galois2025satriani} have proposed building these LLM-based applications out of \textbf{semantic operators}. Inspired by relational operators \cite{hellerstein2007}, semantic operators are AI-powered data transformations with natural language specifications. These include LLM-powered maps, filters, joins, aggregations, etc. and are useful for unstructured data processing tasks such as information extraction, summarization, ranking, and classification.

Developers can define a \textbf{semantic operator system} by writing a declarative AI program (e.g., in Palimpzest or a similar framework). The program defines a logical plan, which an optimizer can compile into a physical plan. For example, \Cref{fig:example} illustrates a use case where a researcher wishes to search for papers relevant to their interests. First, the program loads the papers and filters for ones related to data systems. Then, the program computes a summary of each paper's main contributions. Finally, the papers are classified as having high or low relevance to the author's research interests.

In order to execute this program, the optimizer must decide how to implement each semantic filter and map in terms of underlying physical operations (e.g., calls to LLMs). For example, given many LLMs of different sizes, the optimizer may simply need to choose which LLM to use for each operator. However, the optimizer may also be allowed to choose from more complex techniques such as using an LLM ensemble \cite{wang2025mixtureofagents}, reducing the input context before feeding it to an LLM \cite{lewis2020rag}, and more. With access to a handful of models and hyperparameters, a few techniques can provide an optimizer with thousands of physical implementation alternatives that trade-off operator quality, dollar cost, and latency (\Cref{sec:implementation}).

\textbf{Goal.} The optimizer's goal is to compile a semantic operator program to a physical plan which is (near-)optimal for the developer's objective with respect to system quality, cost, and latency. For example, in the center and right-hand side of \Cref{fig:example}, we show two physical plans for two different optimization objectives. The first plan is compiled with the goal of maximizing system output quality, while the second is compiled to maximize quality subject to spending less than \$1 on processing the entire workload. Here, the cost-constrained plan uses lighter weight models and simpler strategies.  Ideally, a cost-based optimizer can weigh the trade-offs of different physical operators to implement each plan optimally.

\textbf{Our Approach.} In this paper we describe \system{}, a new cost-based optimizer implemented in Palimpzest~\cite{liu2025palimpzest}. While some existing semantic programming frameworks have implemented optimizers, they typically only optimize for system quality, do not consider constraints on system cost or latency, and use a limited set of rewrite-style or proxy-based optimizations \cite{shankar2024docetlagenticqueryrewriting, patel2025semanticoperatorsdeclarativemodel}. By contrast, \system{} is a general-purpose, cost-based optimizer that optimizes system output quality, dollar cost, or latency with respect to zero or more constraints on the other dimensions. \system{} uses implementation and transformation rules to define a valid set of physical plans, similar to a Cascades query optimizer \cite{Graefe1995TheCF}. However, the uncertain nature of semantic operator quality---combined with the lack of principled models (e.g. cardinality estimators, histograms, etc.) for estimating operator quality, cost, and latency---makes cost-based optimization challenging. 

To overcome this, \system{} applies three key ideas. First, \system{} models the search for useful operators as an infinite-armed bandit problem \cite{agrawal1995infinite, audibert2008mabsurvey} and uses sampling to estimate physical operator performance. Given the cost of invoking LLMs, \system{} must be judicious in choosing which physical operators to sample and how many samples to spend on each operator. This is especially difficult in constrained optimization settings, where \system{} must discover the Pareto frontier of physical operators as opposed to a single objective maximizing operator. To this end, \system{} modifies an upper-confidence bound (UCB) bandit algorithm to enable it to search for the Pareto frontier of physical operators. The multi-armed bandit (MAB) algorithm can also leverage prior beliefs about operator performance to significantly accelerate its search.

Second, similar to relational query optimization, the space of physical plans grows combinatorially with the number of operators in the system. However, while relational query optimizers can use precomputed statistics to estimate the performance of plans at scale, \system{}'s sample-based approach quickly becomes too expensive. To mitigate this issue, \system{} approximates plan performance as a function of its individual operators' performance. This decomposition allows \system{} to estimate the performance of a combinatorially large space of plans given a much smaller set of operator estimates. Third, the traditional dynamic programming algorithm used in Cascades \cite{Graefe1995TheCF} is not designed to support constrained optimization problems. To overcome this, \system{} implements a new Pareto-Cascades algorithm which keeps track of the Pareto frontier of subplans throughout the optimization procedure.  

\textbf{Performance.} We have implemented \system{} as an optimizer in Palimpzest and evaluate its ability to optimize systems for document processing workloads in the biomedical and legal domains (BioDEX; CUAD) and multi-modal question answering (MMQA). Our results show that \system{} identifies plans with 20.8\%, 39.4\%, and 6.7\% better quality, respectively, than similar plans optimized by DocETL and LOTUS. Furthermore, plans optimized by \system{} are on-average 10.8x cheaper and 3.4x faster than plans optimized by the next best system (in terms of quality).

We also show that, at a fixed sample budget, \system{} can use prior beliefs (i.e., a relative ranking of operators' quality, cost, and latency) to optimize plans to have up to 3.04x better quality than without priors. We further demonstrate that \system{} satisfies constraints in a non-trivial manner and improves system performance as constraints are relaxed. Finally, we perform an ablation study to isolate the benefits of prior beliefs, Pareto-Cascades, and the MAB algorithms and show that each helps improve \system{}'s performance on two constrained optimization queries.

In summary, we present \system{} --- a cost-based optimizer for semantic operator systems. Our main contributions are:
\begin{itemize}
    \item An extensible, cost-based optimizer which allows for new semantic operators and optimization rules without changes to its host programming framework (\Cref{sec:system-overview}).
    \item The implementation of algorithms which enable (1) efficient search over the space of semantic operator systems and (2) constrained optimization of these systems (\Cref{sec:algorithms}).
    \item Quality improvements of up to 39.4\% over competing state-of-the-art systems with cost and runtime savings of 10.8x and 3.4x relative to the next best system (\Cref{sec:evaluation}).
    \item An investigation of \system{}'s algorithmic contributions which shows that prior beliefs, Pareto-Cascades, and MAB sampling can improve optimization outcomes (\Cref{sec:evaluation}).
\end{itemize}

\section{System Overview}
\label{sec:system-overview}
In this section we present an overview of \system{}. First, we provide a brief background on semantic operators and the programming frameworks which optimize them. Then, we describe the end-to-end process by which \system{} optimizes semantic operator systems. Finally, we motivate the need for two key algorithms to make \system{}'s optimization tractable, which we discuss in \Cref{sec:algorithms}.

\subsection{Background: Semantic Operator Systems}
\textbf{Background and terminology.} Recent work has explored the use of \textit{semantic operators} to implement data processing pipelines over unstructured data. Semantic operators are a set of AI-powered data transformations which mirror and extend relational operators \cite{hellerstein2007}. Unlike their relational counterparts, semantic operators are specified in natural language as opposed to a SQL expression or a user-defined function. As a result, these operators' physical implementations typically require the use of one or more foundation models with semantic understanding.

\begin{table}[t]
  \caption{Semantic operators supported by \system{}. In our implementation $d$ is a (valid) JSON dictionary, but in principle $d$ can be any serializable object. The $\cup$ symbol represents the union of output types. $i$ is an integer index, $P$ is a filter predicate, $V$ is a vector database, and $L$ is an integer limit. Aggregate includes group-by operations.}
  \label{tab:operators}
  \begin{tabular}{ccl}
    \toprule
    Operator Name & Symbol & Definition  \\
    \midrule
    Scan & $\phi$ & $\phi(i) \rightarrow d$ \\
    Map & $\mu$ & $\mu(d) \rightarrow d' \cup [d', d'', \dots]$ \\
    Filter & $\sigma$ & $\sigma(d, P) \rightarrow d \cup \emptyset$ \\
    Join & $\bowtie$ & $\bowtie(d,d') \rightarrow d'' \cup \emptyset$ \\
    Top-K & $\rho$ & $\rho(d, V) \rightarrow d' $ \\
    Project & $\pi$ & $\pi(d) \rightarrow d' \subseteq d$ \\
    Aggregate & $\alpha$ & $\alpha([d', d'', \dots]) \rightarrow \mathbb{R} \cup d$\\
    Limit & $\lambda$ & $\lambda([d', d'', \dots], L) \rightarrow [d', \dots, d^L]$\\
  \bottomrule
\end{tabular}
\end{table}

Semantic programming frameworks like Palimpzest \cite{liu2025palimpzest}, LOTUS \cite{patel2025semanticoperatorsdeclarativemodel}, DocETL \cite{shankar2024docetlagenticqueryrewriting}, and Aryn \cite{aryn2025} enable users to compose semantic operators into pipelines or directed acyclic graphs (DAGs). We refer to these computation graphs of semantic operators as \textit{semantic operator systems}. Each framework implements an evolving and growing set of semantic operators, thus we highlight the operators currently supported by \system{} in \autoref{tab:operators}.

Each semantic operator corresponds to a \textbf{logical operator} which may be implemented by a variety of \textbf{physical operators}. For example, two of the semantic map operators in \autoref{fig:example} are implemented with a Mixture-of-Agents \cite{wang2025mixtureofagents} architecture and a Reduced-Context generation (\Cref{sec:implementation}). The former is a layered computation graph of LLM ensembles, while the latter reduces the context to contain only the most relevant portions of the input before feeding it into an LLM. Each of these physical operators has multiple hyperparameters (e.g. the models and temperature settings for Mixture-of-Agents; the model, chunk size, and number of chunks for Reduced-Context) leading to a large space of physical operators. 

\subsection{\system{} Optimizer}
\label{sec:abacus-system-description}
We illustrate \system{}'s end-to-end process for optimizing semantic operator systems, beginning with a high-level overview of its key steps which are shown in \Cref{fig:system-diagram}.

\begin{figure*}[t!]
    \centering
    \includegraphics[width=\linewidth]{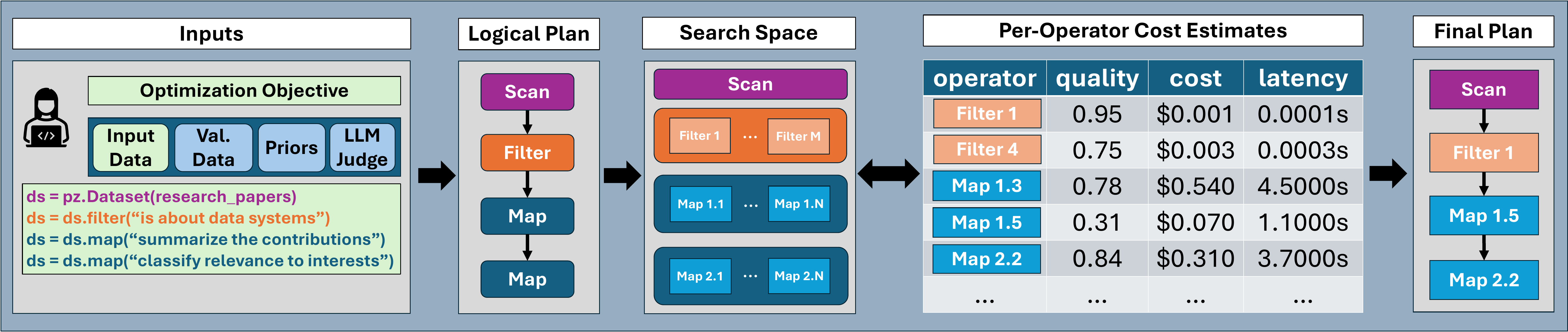}
    \caption{Overview of \system{}. The developer provides an AI program, optimization objective, input data, and (optionally) validation data, priors, and/or an LLM judge. \system{} (1) compiles the program to an initial logical plan, (2) applies rules to enumerate a search space of physical plans, (3) builds a cost model by processing validation inputs with sampled physical operators to measure their performance, and (4) returns the Pareto-optimal plan based on its estimates and the user objective.}
    \Description{Overview of Abacus. The developer provides an AI program, optimization objective, input data, and (optionally) validation data. The program is compiled into an initial logical plan. Abacus applies rules to enumerate a search space of physical plans. Abacus iteratively samples physical operators and processes validation inputs with them to build up a cost model of the operator performance. Abacus returns the Pareto-optimal plan based on its estimates and the user objective.}
    \label{fig:system-diagram}
\end{figure*}

\textbf{Inputs and Compilation.} \system{} requires three inputs: an AI program, an optimization objective, and an input dataset. The AI program must be a pipeline or DAG of semantic operators supported by \system{}. The optimization objective is a constrained or unconstrained objective with respect to system output quality, dollar cost, and/or latency. The input dataset is an unstructured dataset of documents, images, songs, etc. which the physical implementation of the AI program will process. Optionally, users may guide the optimization process by providing any combination of a small validation dataset, prior beliefs about operator performance, and/or an LLM to use as a judge. A typical validation dataset contains 5-10 inputs with (possibly partial) labels. Prior beliefs are a simple dictionary mapping operators to a three-tuple of their perceived quality, cost, and latency on a [0,1] scale. Finally, \system{} can use an LLM as a judge to evaluate the quality of physical operators' outputs when labels are not present.

For example, in \Cref{fig:example}, the AI program consists of a semantic filter followed by two semantic maps. The figure shows two objectives: maximizing quality and maximizing quality subject to a constraint on cost. The input dataset is a set of research papers, and the validation dataset (not shown) could be a handful of additional research papers whose relevance has been labeled. Finally, given these inputs, \system{} compiles the program into a logical plan where each semantic operator corresponds to a logical operator. 

\textbf{Creation of Search Space.} Once the user's program has been compiled to a logical plan, \system{} uses its rules to enumerate a space of valid physical operators for each logical operator. This corresponds to the Search Space in \Cref{fig:system-diagram}. Each rule consists of two parts: (1) a pattern matching function which defines the logical subplan the rule can be applied to, and (2) a substitution function which applies the rule.

\textit{Transformation rules} produce new functionally equivalent logical subplans. For example, a transformation rule may swap a filter and a map operation such that the filter is executed before the map. \textit{Implementation rules} define ways to implement semantic operator(s) in a logical plan. For example, an implementation rule may implement a map operator with a Mixture-of-Agents or a Reduced-Context generation as depicted in \Cref{fig:example}. Currently, \system{} only applies implementation rules when generating the search space, but in the future it may apply transformation rules as well. \system{} will apply transformation rules during the Final Plan Selection (see below) to ensure the final physical plan makes use of optimizations like filter pushdown and join re-ordering.

\textbf{Operator Sampling.} Given the search space of physical operators, \system{} seeks to identify ones which can be composed into physical plans that optimize the user's objective. For unconstrained optimization (e.g. maximizing plan quality), this implies finding high-quality physical operator(s). For optimization with constraints (e.g. maximizing plan quality subject to a cost constraint), this suggests finding physical operators which lie on the Pareto frontier of the cost vs. quality trade-off.

\system{} initially samples a small batch of physical operators for each logical operator. If \system{} has access to prior beliefs about operator performance, it samples operators which it believes lie closest to the Pareto frontier of the optimization objective first. Otherwise, it samples operators at random. Given these operators, \system{} executes them on inputs sampled from the validation dataset (or the input dataset if no validation data is present). \system{} measures the quality, cost, and latency of each operator on each input. To measure quality, \system{} uses output label(s) from the validation dataset when they are available. However, if no label exists---either because validation data is not provided or it does not contain some intermediate label(s)---then \system{} evaluates each operator's output with an LLM judge. By default, \system{} will use \texttt{o4-mini} as the LLM judge, but the user may specify another model.

Once it has estimated each operator's quality, cost, and latency, \system{} computes the Pareto frontier of physical operators (with respect to the optimization objective) for each logical operator. Physical operators which fall too far from the frontier are removed, and new operators are sampled to replace them. The next batch of inputs is then processed with the new operator frontiers, and the process repeats until the sample budget (measured in dollars or operator invocations) has been reached. The cost overhead of the sampling algorithm is bounded above by the sample budget, and the latency overhead only scales with the depth of the logical plan.

\textbf{Final Plan Selection.} Once the sample budget is exhausted, \system{} needs to construct a final plan to process the input dataset. First, it computes each physical operator's average quality, cost, and latency on sampled inputs. \system{} then passes these estimates and the user's optimization objective to its Pareto-Cascades algorithm (\Cref{sec:pareto-cascades}) to compute the optimal plan. The algorithm applies both transformation and implementation rules to search the full space of logical and physical plans, using \system{}'s cost model (\autoref{sec:optimization-challenges}) to assess their quality, cost, and latency.

\textbf{Full Algorithm.} The full algorithm for \system{} is shown in \Cref{algo:abacus}. The user program is compiled into an initial logical plan on line 1. On line 2, \system{} applies rules to create a search space of physical operators. Line 3 initializes a cost model which keeps track of each operator's average quality, cost, and latency. We describe the cost model in more detail in \Cref{sec:optimization-challenges}. On line 4, \system{} samples an initial ``frontier" of $k$ physical operators for each logical operator. On line 7, each frontier processes a sample of $j$ inputs, updating the number of samples drawn. This also yields a set of observations of operator quality, cost, and latency, which are used to update the cost model on line 8. On line 9, operators which perform poorly are replaced in each frontier. We discuss the algorithm for updating the operator frontiers in detail in \Cref{sec:mab-sampling-algo}. Once the number of samples drawn (or the dollar cost of sampling) exceeds the sample budget on line 6, the operator sampling stops. Finally, on line 10 \system{}'s Pareto-Cascades algorithm returns the optimal physical plan with respect to the the operator estimates and the optimization objective. We discuss the Pareto-Cascades algorithm in detail in \Cref{sec:pareto-cascades}.

\subsection{Key Challenges in Optimization}
\label{sec:optimization-challenges}
We now motivate the design of \system{}'s cost model, operator sampling algorithm, and final plan selection algorithm.

\textbf{Cost Model.} Given a logical plan with $M$ semantic operators and a choice of $N$ physical implementations per operator, the space of possible physical plans is of size $O(N^M)$ before considering operator re-orderings. Even for relatively modest values of $M$ and $N$, the space of plans quickly grows too large to sample each plan and measure its output quality, cost, and latency.

To address this, \system{} makes the simplifying assumption that operators are independent, and that each plan can be modeled as a function of its operators. \system{}'s model for the plan quality ($p_q$), cost ($p_c$), and latency ($p_l$) as a function of its operators' quality ($o_{qi}$), cost ($o_{ci}$), and latency ($o_{li}$) is shown below:
\begin{align}
    \hat{p}_{q} &= \prod_{i=1}^M \hat{o}_{qi} & \hat{p}_{c} &= \sum_{i=1}^M \hat{o}_{ci} & \hat{p}_{l} &= \max_{\textrm{path} \in p} \sum_{i \in \textrm{path}} \hat{o}_{li} \label{eq:cost-model}
\end{align}

One limitation of this cost model is that it fails to model interactions between operators. For example, if a semantic filter uses the summary produced by an upstream map operator as input, then this cost model will fail to capture that the filter's performance is correlated with the quality of the map operator's summary. We evaluate \system{} on some queries with this property in \Cref{sec:abacus-vs-prior-work} and discuss the limitations of this cost model in \Cref{sec:limitations}.

\begin{algorithm}[t]
    \caption{\system{} algorithm}
    \label{algo:abacus}
    \begin{algorithmic}[1]
        \Require program $P$, objective $O$, val. data $D$
        \Params budget $B$; $k$, $j$
        \State $logical\_plan$ = compile($P$)
        \State $search\_space$ = applyRules($logical\_plan$)
        \State $M$ = initCostModel()
        \State $F$ = sampleOpFrontiers($search\_space$, $k$)
        \State $samples\_drawn = 0$
        \While{$samples\_drawn < B$}
            \State $outputs, samples\_drawn$ = processSamples($F$, $D$, $j$) \State $M$ = updateCostModel($M$, $outputs$)
            \State $F$ = updateFrontiers($F$, $M$, $O$)
        \EndWhile
        \State \Return ParetoCascades($logical\_plan$, $M$, $O$)
    \end{algorithmic}
\end{algorithm}

\textbf{Operator Sampling Challenges.} For large enough $N$, it can be computationally infeasible to sample every physical operator for even a single semantic operator. For example, in our implementation of \system{} (\Cref{sec:implementation}), a semantic map can be implemented with approximately 2,800 different operators. While the task of finding and choosing physical operator(s) may seem daunting, in most settings \system{} simply needs to produce a plan which is ``good enough" for the user's application goals. This relaxes the operator search problem from finding the single best ``needle in a haystack" to finding at least one operator from a handful of good options.

In the $sampleOpFrontiers()$ function in \Cref{algo:abacus}, we sample an initial set (i.e., frontier) of physical operators for each logical operator in the plan. Then, in the $updateFrontiers()$ function we update each frontier of physical operators based on their observed quality, cost, and latency on sampled inputs. We model the sampling of physical operators as a multi-armed bandit (MAB) problem. Intuitively, given a fixed sampling budget, we seek to navigate the exploration-exploitation trade-off between sampling new (potentially better) operators and sampling the previously best observed operator(s) to refine our confidence in their performance.

Unfortunately, the traditional MAB formulation is focused on finding the single most-optimal arm for an unconstrained objective. However, constrained optimization requires that we account for the trade-off between the optimization objective and the constraint(s). To this end, we modify the traditional MAB formulation to encourage the exploration-exploitation trade-off of \textit{the entire Pareto frontier of operators}. We formalize this algorithm in \Cref{sec:mab-sampling-algo}.

\textbf{Final Plan Selection Challenges.} Once \system{} finishes sampling operators, it still needs to identify and return the optimal physical plan. For an unconstrained objective such as minimizing plan cost, \system{} can invoke a traditional Cascades \cite{Graefe1995TheCF} algorithm to recover the minimum cost plan.

However, for constrained optimization the traditional Cascades algorithm is insufficient. The key issue is that Cascades will only keep track of the ``best" implementation of every subplan. However, in the constrained setting---where we care about multiple dimensions of plan performance---finding the optimal plan requires considering the Pareto frontier of optimization trade-offs at each subplan. We implement the Pareto-Cascades algorithm to overcome this challenge, and discuss its implementation in \Cref{sec:pareto-cascades}.
\begin{figure*}[!t]
    \centering
    \includegraphics[width=.9\textwidth]{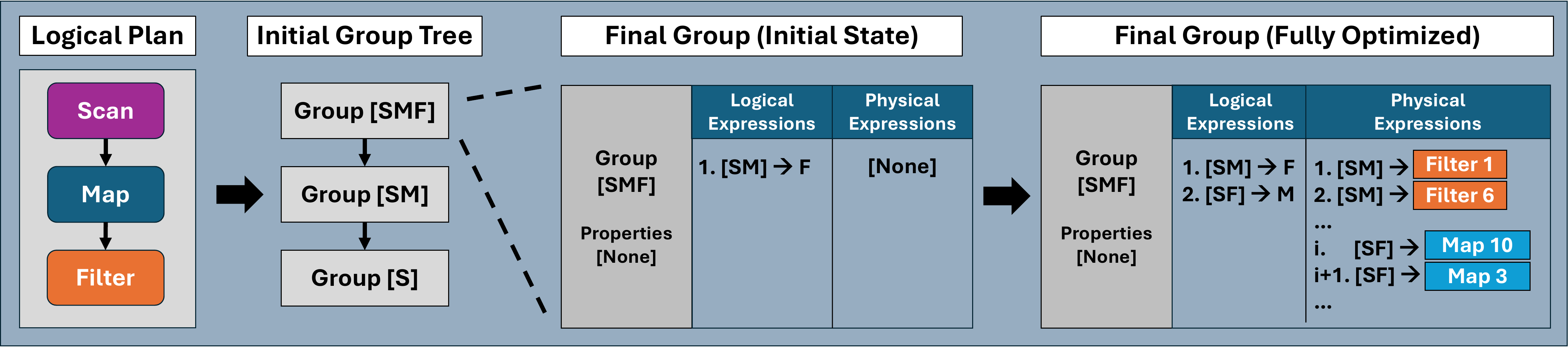}
    \caption{Example of the Cascades algorithm applied to a simple logical plan; it first constructs an initial group tree with one logical expression per group and then applies a task to optimize the final group (\textbf{SMF}), which uses dynamic programming to search plans via repeated application of tasks. After all possible tasks have been applied (or a limit on the total number of tasks has been reached), it recursively constructs the optimal physical plan by selecting the best physical expression at each group.}
    \Description{An example of the Cascades algorithm applied to a simple logical plan.}
    \label{fig:cascades-example}
\end{figure*}
\begin{algorithm}[t]
    \caption{Full Cascades algorithm}
    \label{algo:traditional-cascades}
    \begin{algorithmic}[1]
        \Require logical plan $P$, cost model $M$, rules $R$
        \Procedure{Cascades}{$P$, $M$, $R$}
            \State $G$ = createInitialGroups($P$)
            \State $G$ = searchPlanSpace($G$, $M$, $R$)
            \State $final\_group\_id$ = getFinalGroupId($G$)
            \State \Return getMinCostPlan($final\_group\_id$, $G$)
        \EndProcedure
        \Procedure{getMinCostPlan}{$group\_id$, $G$}
            \State $best\_expr = G[group\_id].best\_expr$
            \State $input\_group\_id = best\_expr.input\_group\_id$
            \If{$input\_group\_id$ is None}
                \State \Return Plan($best\_expr.operator$)
            \EndIf
            \State $best\_subplan$ = getMinCostPlan($input\_group\_id$, $G$)
            \State \Return Plan($best\_subplan$, $best\_expr.operator$)
        \EndProcedure
    \end{algorithmic}
\end{algorithm}
\begin{algorithm}[t]
    \caption{Cascades Subroutine: Plan Search}
    \label{algo:cascades-plan-search}
    \begin{algorithmic}[1]
        \Require groups $G$, cost model $M$, rules $R$
        \Procedure{searchPlanSpace}{$G$, $M$, $R$}
        % \State $final\_group\_id$ = getFinalGroupId($G$)
        \State $task\_stack = [\textrm{OptimizeGroup(getFinalGroupId($G$))}]$
        \While{len($task\_stack$) > 0}
            \State $task = task\_stack.pop()$
            \State $new\_tasks = task.perform(G, M, R)$
            \For{$new\_task$ in $new\_tasks$}
                \State $task\_stack.push(new\_task)$
            \EndFor
        \EndWhile
        \State \Return $G$
        \EndProcedure
    \end{algorithmic}
\end{algorithm}

\section{Algorithms}
\label{sec:algorithms}
In this section, we first present a high-level overview of the traditional Cascades algorithm from relational query optimization. We then discuss the Pareto-Cascades and multi-armed bandit (MAB) operator sampling algorithms we developed for \system{}.

\subsection{Traditional Cascades Optimization}
\label{sec:traditional-cascades}
Cascades \cite{Graefe1995TheCF, xu1998columbia} takes a logical plan, a cost model, and a set of rules as input. Given these inputs, Cascades seeks to find an implementation of each operator that globally optimizes the plan to meet some objective (e.g. minimizing execution cost).

%To illustrate how Cascades works,
Consider the toy example in \Cref{fig:cascades-example}. First, Cascades converts the logical plan into an \textbf{initial group tree} (\Cref{algo:traditional-cascades}, Line 2). Each \textbf{group} represents the execution of a unique set of operators. In \Cref{fig:cascades-example}, we expand the \textbf{final group} which represents the execution of all operators in the plan. Each group has a set of \textbf{logical} and \textbf{physical expressions}, which represent unique logical and physical subplans which implement that group.

Initially, each group has a single logical expression which is translated directly from the logical plan (in this case, executing filter $F$ after map $M$ and scan $S$). Given the initial group tree, the Cascades algorithm searches the space of possible physical plans (\Cref{algo:traditional-cascades}, Line 3) by applying a series of \textbf{tasks} in a dynamic programming algorithm shown in \Cref{algo:cascades-plan-search}. There are four main tasks: Optimize Group, Optimize Logical Expression, Apply Rule, and Optimize Physical Expression. The call to optimize a group triggers tasks for optimizing each of its logical and physical expressions. Logical expressions are optimized by applying transformation rules (to generate new equivalent logical expressions) and implementation rules (to generate physical expressions). Finally, the task to optimize a physical expression computes the minimum cost plan for executing the given expression.

We present the full Cascades algorithm in \Cref{algo:traditional-cascades}. The algorithm takes a logical plan, a cost model, and a set of rules as input. It constructs the initial groups (i.e, the group tree) on Line 2 and then invokes the plan search procedure in \Cref{algo:cascades-plan-search} on Line 3. Once the search finishes, the group tree is traversed to construct the final physical plan by using the physical operator in the optimal (i.e. min. cost) physical expression for each group (Line 5). With this understanding in place, we will now discuss \system{}'s Pareto-Cascades algorithm.

\subsection{Pareto-Cascades Optimization}
\label{sec:pareto-cascades}
As discussed in \Cref{sec:optimization-challenges}, Cascades is not designed to handle optimization problems with constraints. The key issue is the \textbf{Principle of Optimality}, which states that \textit{every subplan of an optimal physical plan is itself optimal}. This principle enables Cascades to optimize each physical expression by composing it with the optimal expression for its input group. This is insufficient for problems such as minimizing cost with a lower bound on plan quality, because selecting the minimum cost expression for each group may result in constructing a plan that fails to meet the quality constraint.

In order to address this issue, each dimension of the optimization problem (e.g., cost and quality) must be accounted for. Fortunately, there is a natural way to extend the Principle of Optimality into the constrained optimization setting, which we present as a theorem:

\begin{theorem}
(Under the operator independence assumptions of our cost model in \Cref{sec:optimization-challenges}) every subplan of a Pareto-optimal physical plan is itself Pareto-optimal.
\end{theorem}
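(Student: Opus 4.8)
The plan is to argue by contradiction using an exchange (cut-and-paste) argument on subplans. Suppose $p$ is a physical plan implementing the root logical plan and that some subplan $s$ of $p$ — the subplan implementing some group $g$ — fails to be Pareto-optimal among all physical plans for $g$. Then, by definition of the Pareto frontier, there is another physical plan $s'$ for $g$ that weakly dominates $s$ in all three metrics (quality no lower, cost no higher, latency no higher) and strictly beats it in at least one. Form $p'$ by substituting $s'$ for $s$ inside $p$; because $s'$ implements the same group $g$, this substitution is well-defined and $p'$ implements the same root logical plan as $p$. The goal is then to show that $p'$ Pareto-dominates $p$, contradicting the assumed Pareto-optimality of $p$, which establishes the theorem.

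The heart of the argument is a monotonicity lemma: holding the rest of $p$ fixed (which is exactly what the operator independence assumption licenses), each of the three plan-level quantities in \system{}'s cost model (\Cref{sec:optimization-challenges}) is monotone in the corresponding subplan-level quantity. For quality this is immediate, since independence lets us write $\hat p_q = \hat s_q \cdot \prod_{i \notin s} \hat o_{qi}$ and every operator quality lies in $[0,1]$, so the trailing product is a nonnegative constant and $\hat s'_q \ge \hat s_q$ yields $\hat p'_q \ge \hat p_q$. For cost it is equally direct: $\hat p_c = \hat s_c + \sum_{i \notin s} \hat o_{ci}$ is a sum of nonnegative terms, so $\hat s'_c \le \hat s_c$ yields $\hat p'_c \le \hat p_c$. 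The latency metric is the one that needs care — I expect it to be the main obstacle — because $\hat p_l = \max_{\textrm{path} \in p} \sum_{i \in \textrm{path}} \hat o_{li}$ is a maximum over paths rather than a plain sum or product. I would handle it by decomposing every source-to-sink path of $p$ according to whether it touches $s$: paths avoiding $s$ are unchanged by the substitution, while each path meeting $s$ splits into a prefix reaching an input of $g$, a middle portion lying entirely inside $s$, and a suffix leaving $g$'s output. Taking the maximum shows $\hat p_l$ is a nondecreasing function of the within-$s$ path lengths, which are themselves bounded by $\hat s_l$; hence $\hat s'_l \le \hat s_l$ implies $\hat p'_l \le \hat p_l$.

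Combining the three cases, $p'$ weakly dominates $p$ in every metric, and by tracking the metric in which $s'$ strictly beat $s$ — appealing, in the generic case, to strictly positive operator qualities (or handling a degenerate zero-quality operator separately) so that a strict quality gain propagates — $p'$ strictly beats $p$ in at least one metric, the desired contradiction. I would close by flagging the one genuinely delicate point inside the latency step: when $s$ attaches to the rest of $p$ at several inputs whose outputs become available at different times, the isolated quantity $\hat s_l$ can discard information about which input lies on the critical path, so the clean monotonicity holds verbatim for pipeline-structured attachments and needs the path-decomposition bookkeeping above (together with the precise definition of a group's latency) to carry over to general DAGs.
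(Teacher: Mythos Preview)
Your proposal is correct and follows essentially the same approach as the paper: contradiction via a subplan exchange, then appeal to the operator-independence cost model to argue that the swapped plan dominates the original. Your version is substantially more careful than the paper's one-paragraph sketch, which simply asserts that ``strictly improving the subplan will also strictly improve the quality, cost, and/or latency of the entire physical plan'' without working through the product/sum/max-over-paths cases or flagging the strict-propagation and DAG-latency subtleties you identify.
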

    
\begin{proof}
    We prove this by contradiction. Assume a Pareto-optimal physical plan $P$ has a subplan $S$ which is not Pareto-optimal. By the definition of $S$ not being Pareto-optimal, there exists a subplan $S'$ which dominates $S$. Replacing $S$ with $S'$ strictly improves the quality, cost, and/or latency of the subplan. Given the operator independence assumptions of our cost model in \Cref{eq:cost-model}, strictly improving the subplan will also strictly improve the quality, cost, and/or latency of the entire physical plan. This new physical plan $P'$ will be strictly better than our original plan $P$ -- but this contradicts our assumption that the original plan $P$ is Pareto-optimal.
\end{proof}

This theorem enables us to extend the Cascades algorithm to the constrained optimization setting by modifying each group to maintain its Pareto frontier of physical expressions during the search procedure in \Cref{algo:cascades-plan-search}. For example, if a user's objective is to maximize plan quality with an upper bound on plan cost, then each group maintains its set of physical expressions which are Pareto-optimal with respect to quality and cost. The task to optimize physical expressions is modified to compute the Pareto frontier of executing the current physical expression with any of the Pareto-optimal expressions from its input group(s). Finally, once the search procedure is finished, the Pareto-optimal plan is recovered by recursively composing all Pareto-optimal subplans before selecting the final plan which is optimal for the given optimization objective. (While the Pareto frontier introduces a branching factor to the plan search space, it is bounded above by the number of physical operators sampled for a given semantic operator).

We present the algorithm for our new Pareto-Cascades algorithm in \Cref{algo:pareto-cascades}. We use the same function for searching the plan space with the modifications described in the previous paragraph. The $getParetoOptPlans()$ function is similar in spirit to $getMinCostPlan()$ in \Cref{algo:traditional-cascades}, except it builds and returns a list of Pareto-optimal plans. The $selectOptimalPlan()$ function picks the plan on the Pareto frontier which is optimal for the optimization objective $O$ (e.g. selecting the max quality plan which is cheaper than a cost upper bound). If the Pareto-Cascades algorithm cannot find a plan which satisfies the given constraint, then the algorithm will return the plan which best optimizes the given objective. Finally, we note that in the case of unconstrained optimization, this algorithm naturally reduces to the traditional Cascades algorithm.

\begin{algorithm}[t]
    \caption{Pareto-Cascades algorithm}
    \label{algo:pareto-cascades}
    \begin{algorithmic}[1]
        \Require logical plan $P$, cost model $M$, rules $R$, objective $O$
        \Procedure{ParetoCascades}{$P$, $M$, $R$, $O$}
            \State $G$ = createInitialGroups($P$)
            \State $G$ = searchPlanSpace($G$, $M$, $R$)
            \State $final\_group\_id$ = getFinalGroupId($G$)
            \State $pareto\_plans$ = getParetoOptPlans($final\_group\_id$, $G$)
            \State \Return selectOptimalPlan($pareto\_plans$, $O$)
        \EndProcedure
    \end{algorithmic}
\end{algorithm}

\subsection{Multi-Armed Bandit Operator Sampling}
\label{sec:mab-sampling-algo}
The second key optimization challenge in \system{} is choosing which physical operators to sample in order to obtain estimates of operator quality, cost, and latency. As discussed in \Cref{sec:optimization-challenges}, we assume that the number of physical operators $N$ is large enough that \system{} cannot realistically sample every physical operator. To overcome this issue, we draw inspiration from the infinite-armed bandit problem \cite{agrawal1995infinite, audibert2008mabsurvey}, which can also serve as a model for settings with more arms than total samples.

In our setting, the physical operators comprise the ``arms" of our search space and we are given an initial sample budget $B$. At each step of the search, we must choose a physical operator to sample (decreasing our budget by one or by the cost of invoking the operator if $B$ is in dollars) and obtain a stochastic observation of that operator's performance. In contrast to the traditional multi-armed bandit (MAB) setting, where the objective is to identify the single best arm achieving the highest performance in expectation, \system{}'s goal is to identify the potentially many physical operators which lie on the Pareto frontier of its optimization objective.

We present \system{}'s MAB operator sampling algorithm in \Cref{algo:mab-sampling}. The inputs to the algorithm are an initial set of physical operator frontiers $F$ (one for each logical operator, from line 4 of \Cref{algo:abacus}), the cost model $M$, and the optimization objective $O$. The algorithm begins by computing the upper confidence bounds (UCBs), lower confidence bounds (LCBs), and means for each operator on each metric of interest for the objective $O$. The equations for computing the UCB and LCB of a given metric are shown below: 
\begin{align*}
    ucb_{m,i} &= \mu_{m,i} + \alpha \cdot \sqrt{\frac{\log(N)}{n_i}} & lcb_{m,i} &= \mu_{m,i} - \alpha \cdot \sqrt{\frac{\log(N)}{n_i}}
\end{align*}
\noindent The $\mu_{m,i}$ term is the sample mean of the observed performance for the given metric $m$ (e.g. operator latency) for the $i^{th}$ physical operator. $N$ is the total number of samples drawn and $n_i$ is the number of samples drawn for the $i^{th}$ physical operator. Finally, $\alpha \in [0, 1]$ is the exploration coefficient, which we dynamically scale to be $0.5$ times the spread between the largest and smallest observed metric values across all physical operators.

Once the UCBs, LCBs, and means are computed for every operator and metric, we compute the set of Pareto-optimal operators based on their mean performance. Then, for each operator in the frontier, we check whether its upper confidence bound overlaps with the lower confidence bound of at least one operator on the Pareto frontier. Such an overlap implies that there's enough uncertainty in our estimates of operator performance that it is possible for the operator to lie on the Pareto frontier. If no overlap exists, then we remove the operator from the frontier and sample a replacement from our reservoir of not yet sampled physical operators. This completes the update of the operator frontier.

The key difference between this algorithm and a traditional UCB algorithm for MABs is that we must consider overlap between each operator and the current Pareto frontier of sampled operators. Overlap on any dimension implies that the operator may still be Pareto-optimal, thus eliminating operators from consideration can be slightly more sample intensive and time consuming. In order to speed up the algorithm, we construct batches of samples rather than processing one sample at a time.

\begin{algorithm}[t]
    \caption{MAB operator sampling algorithm}
    \label{algo:mab-sampling}
    \begin{algorithmic}[1]
        \Require initial operator frontiers $F$, cost model $M$, objective $O$
        \Procedure{updateFrontiers}{$F$, $M$, $O$}
            \For{$op\_frontier$ in $F$}
                \State $op\_UCBs$ = computeUCBs($op\_frontier$, $M$, $O$)
                \State $op\_LCBs$ = computeLCBs($op\_frontier$, $M$, $O$)
                \State $op\_means$ = computeMeans($op\_frontier$, $M$, $O$)
                \State $pareto\_ops$ = computeParetoOps($op\_means$, $O$)
                \State $num\_new\_ops = 0$
                \For{$op$ in $op\_frontier$}
                    \State $ucbs = op\_UCBs[op]$
                    \State $pareto\_lcbs = op\_LCBs[pareto\_ops]$
                    \If{no\_overlap($ucbs$, $pareto\_lcbs$)}
                        \State $op\_frontier.pop(op)$
                        \State $num\_new\_ops \mathrel{+}= 1$
                    \EndIf
                \EndFor
                \State $new\_ops$ = sampleReservoir($num\_new\_ops$)
                \State $op\_frontier.add(new\_ops)$
            \EndFor
        \EndProcedure
    \end{algorithmic}
\end{algorithm}

Finally, one benefit of this problem formulation is that it allows for a number of extensions which can accelerate \system{}'s search for Pareto-optimal operators. For example, if there exist prior beliefs about operator performance, \system{} can use them to inform its initial operator frontier as well as the next operator(s) it draws from the reservoir during replacement. We explore the benefit of prior beliefs on operator performance in section \Cref{sec:priors}.
\section{Evaluation}
\label{sec:evaluation}
We evaluate \system{} on a diverse set of benchmarks to examine three experimental claims. First, we demonstrate that semantic operator systems optimized by \system{} outperform similar systems optimized by prior work. Second, we show \system{} leverages prior beliefs to identify better plans with fewer samples.  Third, we demonstrate \system{} improves system performance as constraints are relaxed. Finally, we perform an ablation study to isolate the effects of our key algorithmic contributions.

\subsection{Implementation}
\label{sec:implementation}
We implement \system{} as an optimizer inside of the open-source Palimpzest \cite{liu2025palimpzest} framework, which supports all of the semantic operators in \Cref{tab:operators}. We wrote standard implementation rules for each semantic operator to provide \system{} with the ability to implement any Palimpzest program. We also implemented the following rules for optimizing map, filter, join, and top-k operators.

\textbf{Map and Filter.} We wrote four implementation rules which can be applied to map and filter operations. The \textbf{Model Selection} rule implements the operator with a single LLM call and is parameterized by the set of models supported by Palimpzest. The \textbf{Mixture-of-Agents} rule implements a Mixture-of-Agents architecture \cite{wang2025mixtureofagents} consisting of an ensemble of proposer models followed by an aggregator model. The rule is parameterized by (1) the size of the ensemble (1-3 proposers), (2) the model used for each proposer, (3) the model used for the aggregator, and (4) the temperature of the proposers (0.0, 0.4, or 0.8). The \textbf{Reduced-Context Generation} rule chunks the input, computes an embedding for each chunk, and then concatenates the top-k embeddings (based on similarity with the map/filter instruction) and feeds them into the operator. The rule is parameterized by the chunk size (1000, 2000, or 4000 characters) and $k$ (1, 2, or 4). Finally, the \textbf{Critique-and-Refine} rule uses an LLM to generate an initial output, which is then critiqued by a second model, before a third and final model generates a refined output. The rule is parameterized by the model used for each step.

\textbf{Top-K and Join.} We wrote a single rule to implement a top-k operator. The rule is parameterized by the value $k$ which determines the number of objects returned by the operator. We wrote two rules to implement a semantic join. The \textbf{Nested Loops Join} rule implements the join by evaluating the join condition with an LLM for every join tuple. The \textbf{Embedding Join} rule implements the join by automatically dropping tuples with low embedding similarity and automatically joining tuples with high embedding similarity (tuples in-between a high and low threshold are still processed with an LLM). The Nested Loops Join is generally expensive and accurate, while the Embedding Join rule is cheaper but less accurate.

These rules give \system{}  $\sim3,000$ physical operators when configured with access to all supported LLMs. For our experiments, unless stated otherwise, we provided \system{} with access to GPT-4o, GPT-4o-mini, Llama-3.1-8B, Llama-3.3-70B, Mixtral-8x7B, and DeepSeek-R1-Distill-Qwen-1.5B.

\begin{figure}[t!]
    \centering
    \includegraphics[width=.8\linewidth]{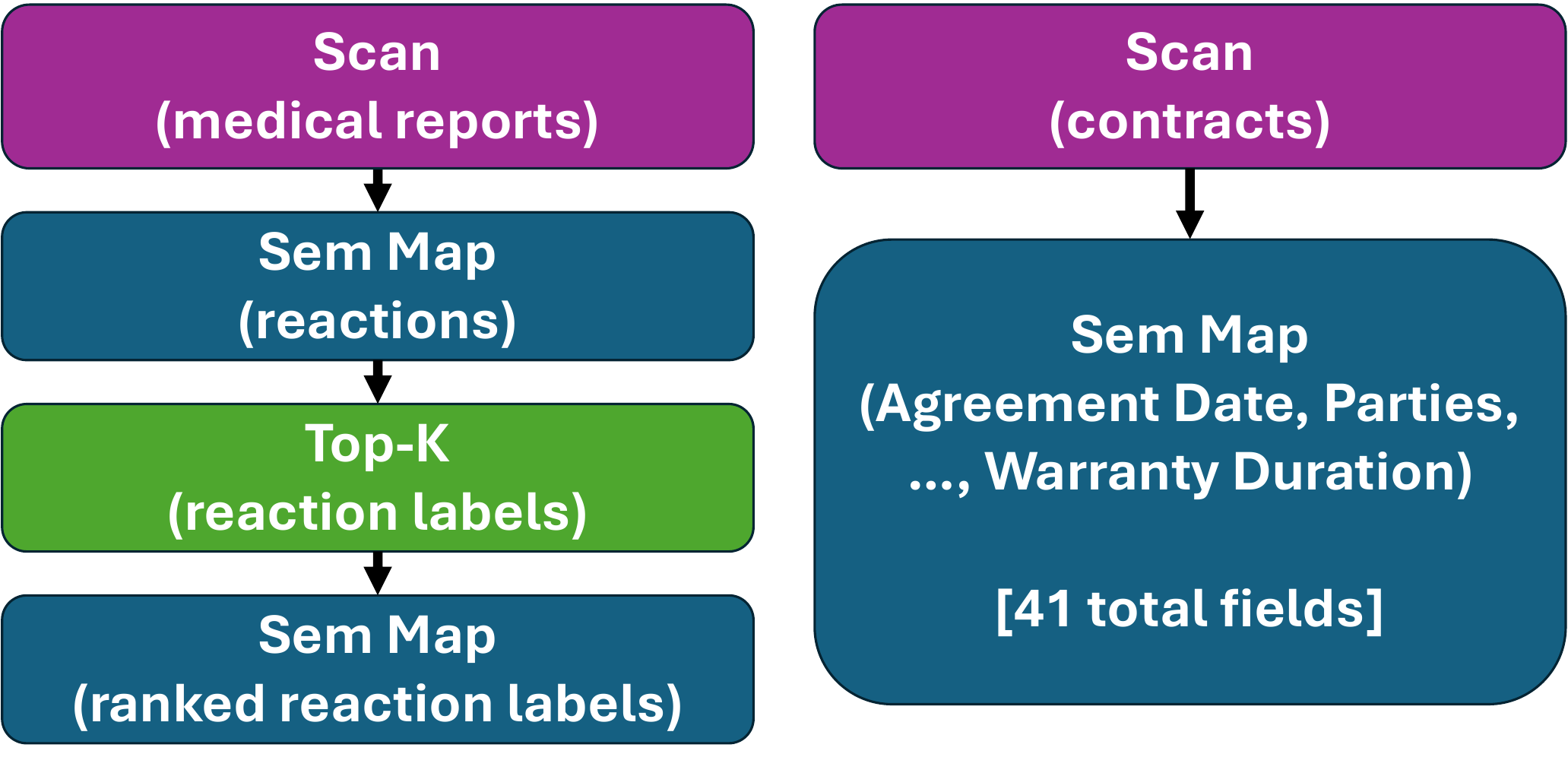}
    \caption{Query plans for BioDEX (left) and CUAD (right).}
    \Description{Query plans for the BioDEX (left) and CUAD (right) benchmarks.}
    \label{fig:biodex-cuad}
\end{figure}

\begin{figure}[t!]
    \centering
    \includegraphics[width=\linewidth]{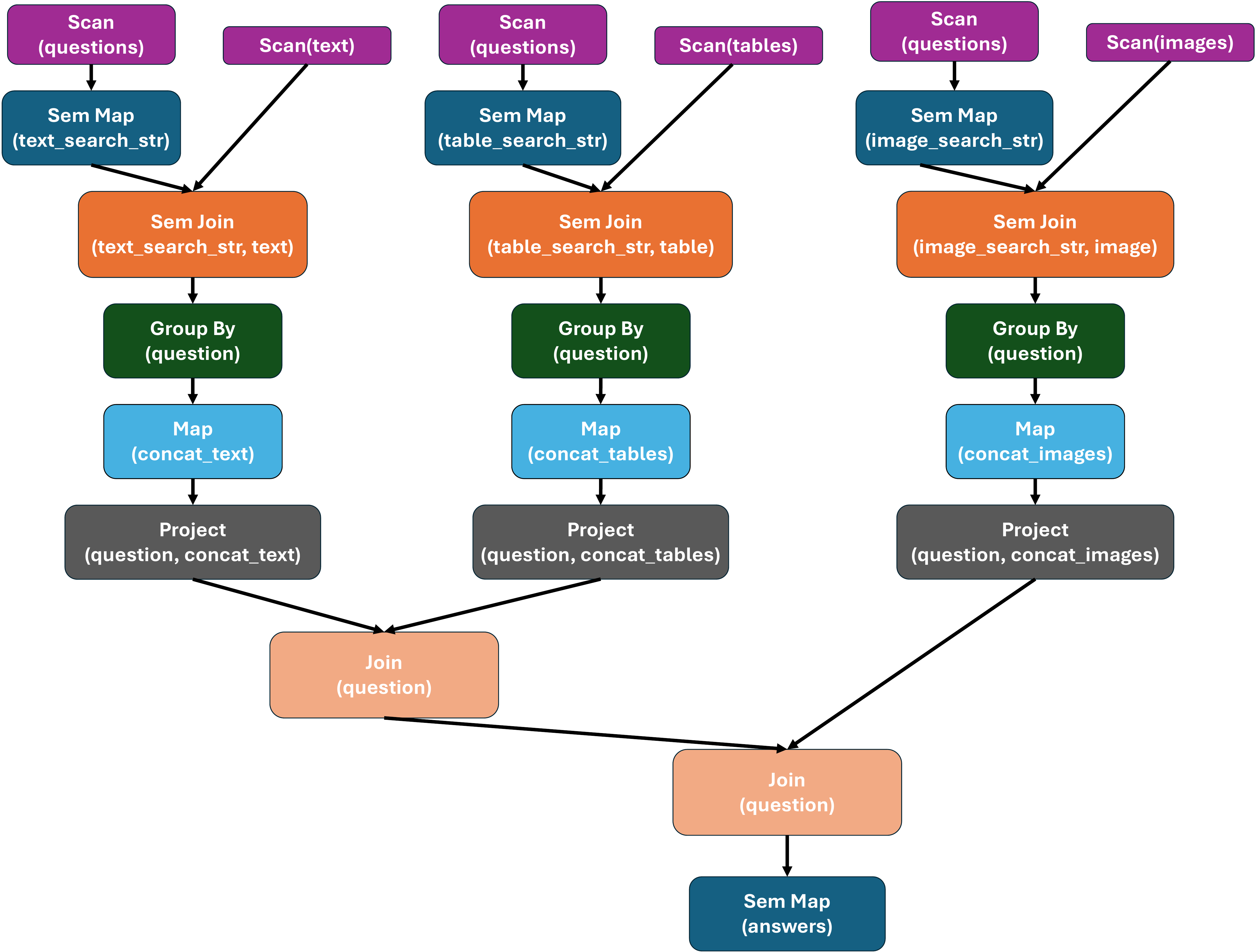}
    \caption{The query plan for the MMQA benchmark.}
    \Description{The query plan for the MMQA benchmark.}
    \label{fig:mmqa-complex}
\end{figure}

\begin{table*}[!t]
  \caption{Performance on the BioDEX, CUAD, and MMQA benchmarks for systems optimized to maximize quality. Quality is measured using RP@K for BioDEX and F1 score for CUAD and MMQA. Mean values are shown with their standard deviation.}
  \label{tab:sys-comparison}
  \begin{tabular}{c|c|>{\columncolor{magenta!20}}c|ccc|ccc}
    \toprule
     & & & & Cost (\$) & & & Time (s) & \\
     & System & Quality & Opt. & Exec. & Total & Opt. & Exec. & Total \\
    \midrule
    & DocETL & $0.193\pm0.032$ & $\$3.50\pm3.04$ & $\$3.04\pm2.51$ & $\$6.54\pm5.53$ & $427\pm130$ & $1,008\pm249$ & $1,435\pm238$ \\
    BioDEX & LOTUS & $0.216\pm0.042$ & -- & -- & $\$18.9\pm12.8$ & -- & -- & $2,348\pm1,489$ \\
    & \system{} & $\mathbf{0.261\pm0.026}$ & $\$0.18\pm0.02$ & $\$0.70\pm0.12$ & $\$0.89\pm0.11$ & $303\pm48$ & $147\pm22$ & $450\pm47$ \\
    \midrule
    & DocETL & $0.475\pm0.106$ & $\$6.04\pm2.52$ & $\$1.01\pm0.330$ & $\$7.05\pm2.63$ & $1,540\pm511$ & $280\pm128$ & $1,820\pm594$ \\
    CUAD & LOTUS & $0.234\pm0.005$ & -- & -- & $\$0.20\pm0.02$ & -- & -- & $125\pm19$\\
    & \system{} & $\mathbf{0.662\pm0.010}$ & $\$0.19\pm0.05$ & $\$0.51\pm0.01$ & $\$0.69\pm0.05$ & $318\pm61$ & $132\pm13$ & $450\pm67$ \\
    \midrule
    & GPT-4o-mini & $0.160\pm0.33$ & -- & -- & $<\$3\cdot10^{-3}$ & -- & -- & $78.0\pm4.9$ \\
    MMQA & LOTUS & $0.284\pm0.046$ & -- & -- & $\$14.3\pm5.8$ & -- & -- & $1,208\pm347$ \\
    & \system{} & $\mathbf{0.304\pm0.079}$ & $\$0.17\pm0.01$ & $\$12.9\pm10.6$ & $\$13.1\pm10.6$ & $598\pm152$ & $550\pm299$ & $1,149\pm300$ \\
    \bottomrule
  \end{tabular}
\end{table*}

\begin{table}[!t]
  \caption{Abacus' performance on BioDEX, CUAD, and MMQA when optimizing to minimize cost (top) and latency (bottom). Reduction measures how much cheaper / faster the optimized plan is relative to the max quality equivalent.
  }
  \label{tab:min-latency-min-cost}
  \begin{tabular}{c|c>{\cellcolor{magenta!20}}cc}
    \toprule
     (MinCost) & Quality & Exec. Cost (\$) & Reduction\\
    \midrule
    BioDEX & $0.21\pm0.02$ & $\$0.28\pm0.10$ & 2.50x \\
    CUAD & $0.05\pm0.02$ & $\$0.12\pm0.01$ & 4.25x \\
    MMQA & $0.31\pm0.05$ & $\$16.0\pm9.7$ & 0.81x \\
    \midrule
    \midrule
    (MinTime) & Quality & Exec. Time (s) & Reduction \\
    \midrule
    BioDEX & $0.21\pm0.03$ & $128\pm50$ & 1.15x \\
    CUAD & $0.10\pm0.05$ & $55\pm18$ & 2.4x \\
    MMQA & $0.28\pm0.07$ & $540\pm382$ & 1.02x \\
    \bottomrule
  \end{tabular}
\end{table}

\subsection{Benchmarks and Implementations.}
\textbf{Benchmarks.} We evaluate \system{} on three benchmarks for processing unstructured documents. Each input in the \textbf{BioDEX} benchmark \cite{doosterlinck2023biodex} is a document describing adverse reaction(s) experienced by a patient in response to taking a drug. In line with prior work \cite{patel2025semanticoperatorsdeclarativemodel, shankar2024docetlagenticqueryrewriting, d2024context}, we focus on the task of producing a ranked list of the adverse reactions experienced by the patient. Success on this task is measured by the rank-precision (RP) of the output rankings at a specified threshold $K$ (i.e. RP@K). Each input in the \textbf{CUAD} \cite{hendrycks2021cuad} benchmark is a legal contract. Given a set of 41 contract clauses, the task is to predict the span(s) in the contract which correspond to each clause (the ground truth for a single clause spans $\sim0.25\%$ of the document on average). Success on this task is measured by the F1-score of the clause predictions. Finally, The \textbf{MMQA} dataset \cite{talmor2021multimodalqa} contains questions involving reasoning over images, text, and/or tables. Success is measured by F1-score on question answers, as the ground truth label for every question is a list of outputs.

\textbf{Implementations.} For BioDEX, we use code from the authors of DocETL and LOTUS to evaluate their systems. Both LOTUS and DocETL compute a semantic join between each input medical document and the list of reaction labels before using a semantic map to rerank the labels. For \system{}, we implement a pipeline in Palimpzest which joins input medical documents to the most similar reaction labels using a semantic map and semantic top-k operator, before reranking the labels using a semantic map (\Cref{fig:biodex-cuad}, left). For CUAD, each framework computes the 41 output clauses with a semantic map (or semantic extract). The code for DocETL was provided by the authors, while the code for LOTUS and \system{} were implemented using a single operator (\Cref{fig:biodex-cuad}, right).

Finally, for MMQA we implement a simple baseline which asks \texttt{GPT-4o-mini} to answer each question without any relevant image, text, or table content. This represents a lower bound on expected performance. DocETL does not support image inputs so we omit it from our evaluation. For LOTUS and \system{}, we implemented a complex query plan (\Cref{fig:mmqa-complex}) that uses three semantic maps to generate search strings for querying relevant images, text, and tables. Next, each data modality is semantically joined to the questions based on their relevance to the search string. The retrieved data is then manipulated with relational group by, map, project, and join operations to obtain a dataset with one row per question, where each row contains a list of the joined images, text, and tables. Finally, the question is answered using a semantic map. (To keep the computation tractable, we limited the datasets to only include data items related to at least one question in the ground truth.)

%%%%%%%%%%%%%%%%%%%%%
\begin{figure*}[t!]
    \centering
    \includegraphics[width=\linewidth]{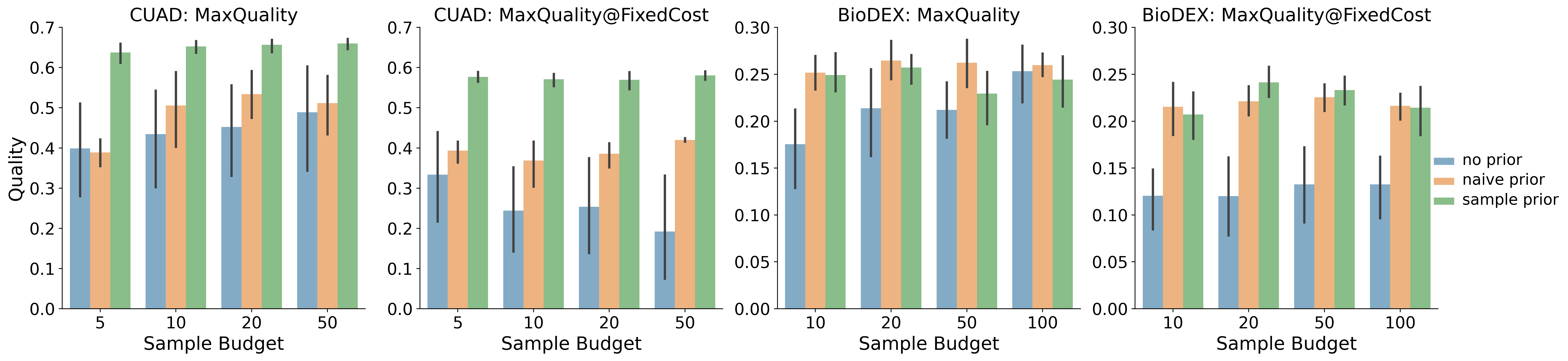}
    \caption{System output quality as a function of the sample budget when optimizing with (1) no priors, (2) naive priors computed from MMLU-Pro performance, and (3) priors computed with samples from each benchmark's train split. Overall, \system{} yields better plans in the constrained and unconstrained settings when leveraging prior beliefs on operator performance.}
    \Description{System output quality as a function of the sample budget when optimizing with (1) no priors, (2) naive priors computed from MMLU-Pro performance, and (3) priors computed with samples from each benchmark's train split. Overall, \system{} yields better plans in the constrained and unconstrained settings when leveraging prior beliefs on operator performance.}
    \label{fig:priors}
\end{figure*}
%%%%%%%%%%%%%%%%%%%%%

\subsection{\system{} Outperforms Prior Work}
\label{sec:abacus-vs-prior-work}
To evaluate our first experimental claim, we compare \system{} to DocETL \cite{shankar2024docetlagenticqueryrewriting} and LOTUS \cite{patel2025semanticoperatorsdeclarativemodel}. In order to maintain parity with their evaluations, we restrict each system to using \texttt{GPT-4o-mini}, \texttt{text-embedding-3-small}, and \texttt{clip-ViT-B-32}.

\textbf{Setup.} We executed each system 10 times on the BioDEX, CUAD, and MMQA benchmarks with the objective of maximizing output quality. For each of the 10 trials, we sampled different examples from the benchmarks' test datasets. These ``splits" (each containing 250 samples for BioDEX and 100 samples for CUAD and MMQA) were drawn to make the evaluation computationally tractable while still accounting for diversity in test examples. We ran each system on each split and measured the output quality, execution cost in dollars, and latency in seconds. We report the mean and standard deviation of these measurements. For DocETL and \system{}, which have distinct optimization and execution stages, we also break out the cost of optimization and the cost to execute the final optimized plan. Finally, for \system{} we used the default values of $k=6$ and $j=4$, while setting the sample budget equal to $50 \times$ the number of semantic operators in the query plan. This heuristic ensures that roughly half of the sample budget ($50 - 6 \cdot 4 = 26$) is used for exploration by the MAB algorithm.  

\textbf{Results.} The results of our evaluation are shown in \Cref{tab:sys-comparison}. Overall, \system{} is able to maximize quality better than all competing systems. On BioDEX, CUAD, and MMQA, \system{} achieves 20.3\%, 18.7\%, and 39.2\% better mean quality than the next best system, respectively. Furthermore, \system{}'s plans are on-average 12.6x cheaper and 2.7x faster than the next best system (in terms of quality). The key drivers of \system{}'s performance improvements vary across benchmarks. However, a common theme is that \system{} succeeds when it (1) has access to implementation rules (i.e. optimizations) that are cost-effective for the given task and (2) is able to obtain good estimates of these optimizations' performance, which help the MAB algorithm  search for high-quality optimizations.

For example, on the BioDEX benchmark, the Reduced-Context Generation rule works well for extracting reactions from the medical reports and re-ranking the final reaction labels (the first and second map in \Cref{fig:biodex-cuad}, respectively). This is because the optimization discards text from the medical report which is not directly related to the patient's adverse reaction(s), thus saving money on token processing while also helping the LLM focus on the most relevant data when performing the map. LOTUS optimizes semantic joins by sampling join tuples in order to learn thresholds for a cascade. However, the quality of the cascade is subject to variance and depends on how well the sampled join tuples represent the overall join. In the worst case, LOTUS produces joins which require $>100,000$ LLM calls, leading to high runtime and cost. For DocETL, we inspect the final pipelines it generates with GPT-4o and find that it often implements the query by (1) using a map to extract reactions from the underlying document, (2) joining these reactions to the reaction labels, and (3) reranking  reaction labels based on document relevance. DocETL's LLM-based optimizer implements the join using a combination of heuristics (e.g. checking that all words in the reaction label appear in the document) and thresholding based on embedding similarity between the reaction label and the extracted reactions. Similar to LOTUS, imprecision in the choice of heuristic and similarity threshold can degrade performance.

On CUAD, \system{} implements the semantic map with a Mixture-of-Agents operator in all 10 trials. The Mixture-of-Agents operator does a good job of extracting legal clauses from the documents with high precision (avg. $87.2\%$) and decent recall (avg. $53.0\%$), likely because its proposer-aggregator architecture enables the aggregator LLM to synthesize a final response which only includes the proposed text spans which are correct with high probability. By comparison, DocETL's LLM optimizer spends anywhere from 20 - 40 minutes incurring high optimization cost as it decomposes the map operation into a multi-stage pipeline. In 10 trials, we observe that DocETL rewrites the map into a pipeline with anywhere from 2 to 7 operations, which ultimately leads to large variance in its performance. Interestingly, we find that DocETL's pipelines perform best (achieving up to 63.7\% F1-score) when it composes a 3-step pipeline and perform much worse (as low as 35.3\% F1-score) on its deeper 7-step pipelines. (LOTUS does not optimize map operators so its implementation is cheap and fast but achieves low quality).

On MMQA, the implementation of the three semantic joins largely determines each system's performance. In particular, the semantic join for relevant images is the primary driver of plan cost and latency. If the joins successfully retrieve the relevant images, text, and/or tables, the final map is significantly more likely to answer the question correctly. \system{} implements the semantic image join with the EmbeddingJoin optimization on 75\% of trials. Similar to LOTUS, the optimization's performance is sensitive to the join tuples sampled during optimization. However, its conservative thresholding leads the EmbeddingJoin to (on average) invoke the LLM more often, which leads to higher average quality with slightly larger cost and latency.

In \Cref{tab:min-latency-min-cost} we examine \system{}'s ability to minimize the cost and latency of query plans on each benchmark. For BioDEX and CUAD, \system{} reduces the average cost / latency of the optimized plans by implementing each map with Reduced-Context Generation rules that process only a fraction of the document. On BioDEX, \system{} already uses instances of this rule when optimizing for quality so the cost and latency savings are smaller than for CUAD. On CUAD, these savings come with a larger trade-off in quality, because text related to the 41 legal clauses are more evenly dispersed through the contract. Finally, \system{} struggles to minimize latency and cost on MMQA because 75\% of the quality maximizing plans already use the EmbeddingJoin optimization, leaving little room for \system{} to improve. Furthermore, because the EmbeddingJoin exhibits high variance in its quality, cost, and latency, when averaged over 10 trials we can achieve results where the minimum cost / latency plans achieve higher cost / latency than their quality maximizing counterparts. In the future--if given better, lower variance optimizations--we expect \system{} could improve its cost and latency minimizing results in \Cref{tab:min-latency-min-cost}.

\subsection{Performance Improves with Better Priors}
\label{sec:priors}
For our second experimental claim, we ran \system{} on CUAD and BioDEX with and without prior beliefs while also varying the sample budget. We omitted MMQA from our evaluation as the number of physical operators for semantic joins is small enough to sample exhaustively. We examined maximizing quality with and without a cost constraint. The cost constraints for CUAD and BioDEX were set equal to the 25th percentile of plan execution costs we observed in the unconstrained setting, thus making them non-trivial to satisfy. We aimed to show that \system{} could leverage prior beliefs to identify more optimal plans with fewer samples.

For each benchmark, we used two sets of prior belief(s). The first ``naive" prior estimated each operator's quality as an average of its model(s') performance on the MMLU-Pro benchmark \cite{wang2024mmlu}. It also estimated the cost of each operator by averaging its per-token input and output costs. This prior is cheap to compute and can be done offline, but lacks fidelity in the accuracy of its estimates. The second ``sample-based" prior estimated operator performance by running each operator on 5 samples from the train split of the respective dataset. This prior is more expensive to compute and must be done online, but has higher fidelity in its estimates.

The results of our evaluation are shown in \Cref{fig:priors}. Overall, we observe that \system{} produces plans with higher quality when provided with prior beliefs. In the unconstrained setting, plans optimized with prior beliefs perform up to 1.60x and 1.43x better (at a fixed sample budget) than those optimized without priors on CUAD and BioDEX, respectively. This gap is even greater in the constrained optimization setting, where plans optimized with prior beliefs perform up to 3.02x and 2.01x better (at a fixed sample budget) than those optimized without priors on CUAD and BioDEX, respectively. This latter result comes from the fact that identifying a good Pareto frontier of operators is more difficult than identifying a single best operator, thus having a prior belief over the entire frontier provides greater benefit relative to sampling without priors.

Finally, due to (1) the higher average operator costs in this experiment and (2) the small sample budget sizes, average plan costs and latencies are higher than in \Cref{tab:sys-comparison}. The average plan costs and latencies in the constrained and unconstrained settings, respectively, are ($\$4.42$, 240s) and ($\$1.15$, 211s) for BioDEX and ($\$6.20$, 213s) and ($\$1.83$, 214s) for CUAD. Still, 91.6\% and 93.3\% of the constrained BioDEX and CUAD plans satisfy the given constraints, suggesting the optimizer was willing to accept higher costs and runtimes in the pursuit of better plan quality.

\subsection{\system{} Leverages Relaxed Constraints}
%%%%%%%%%%%%%%%%%%%%%
\begin{figure}
    \centering
    \includegraphics[width=\linewidth]{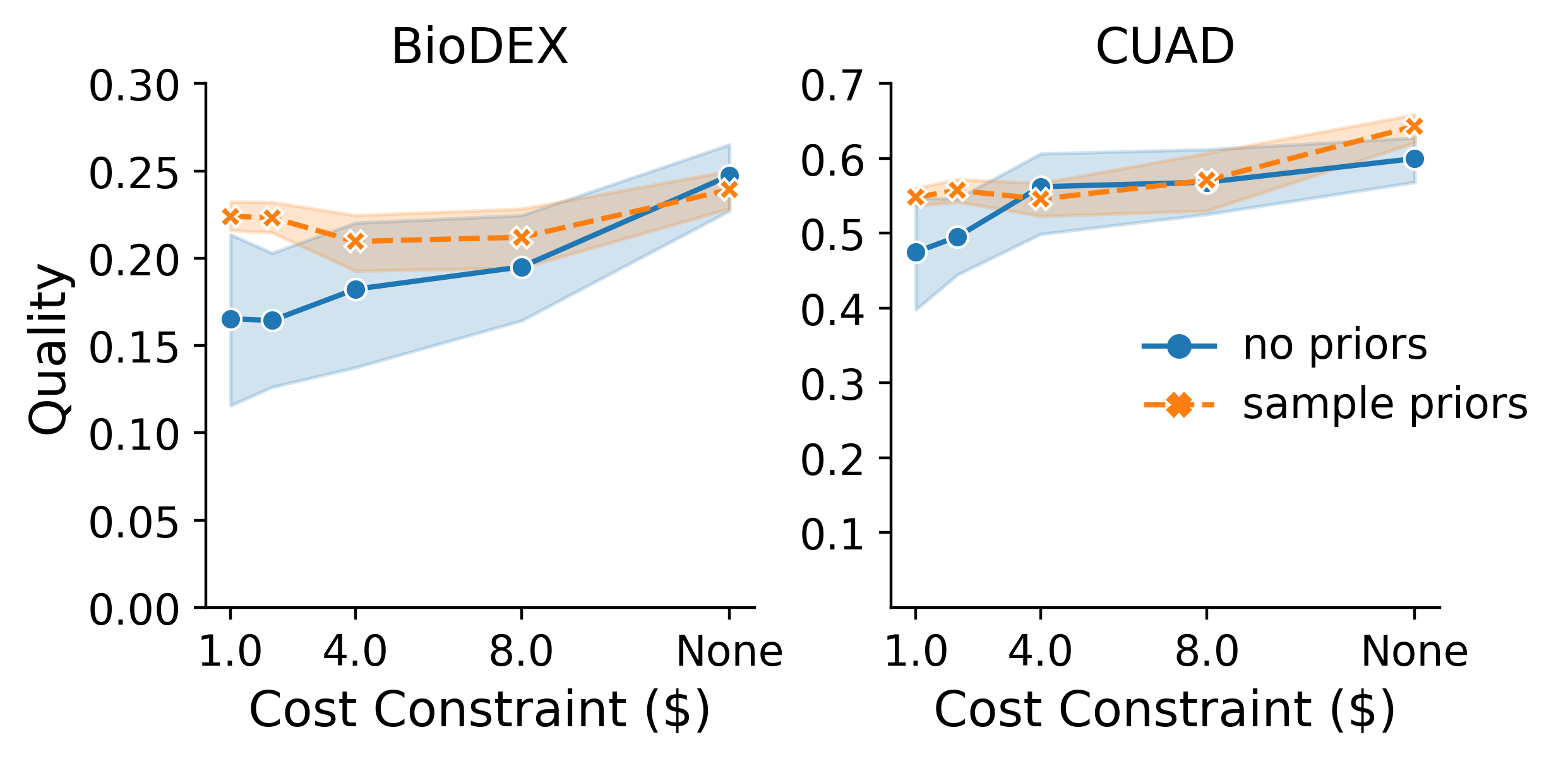}
    \caption{The performance of plans optimized for max quality subject to a cost constraint. Plan performance improves as constraints are relaxed. Prior beliefs help \system{} maintain better performance even with tight cost constraints.}
    \Description{The performance of plans optimized for max quality subject to a cost constraint. Plan performance improves as constraints are relaxed. Prior beliefs help \system{} maintain better performance even with tight cost constraints.}
    \label{fig:cost-thresholds}
\end{figure}
%%%%%%%%%%%%%%%%%%%%%
%%%%%%%%%%%%%%%%%%%%%
\begin{figure}
    \centering
    \includegraphics[width=\linewidth]{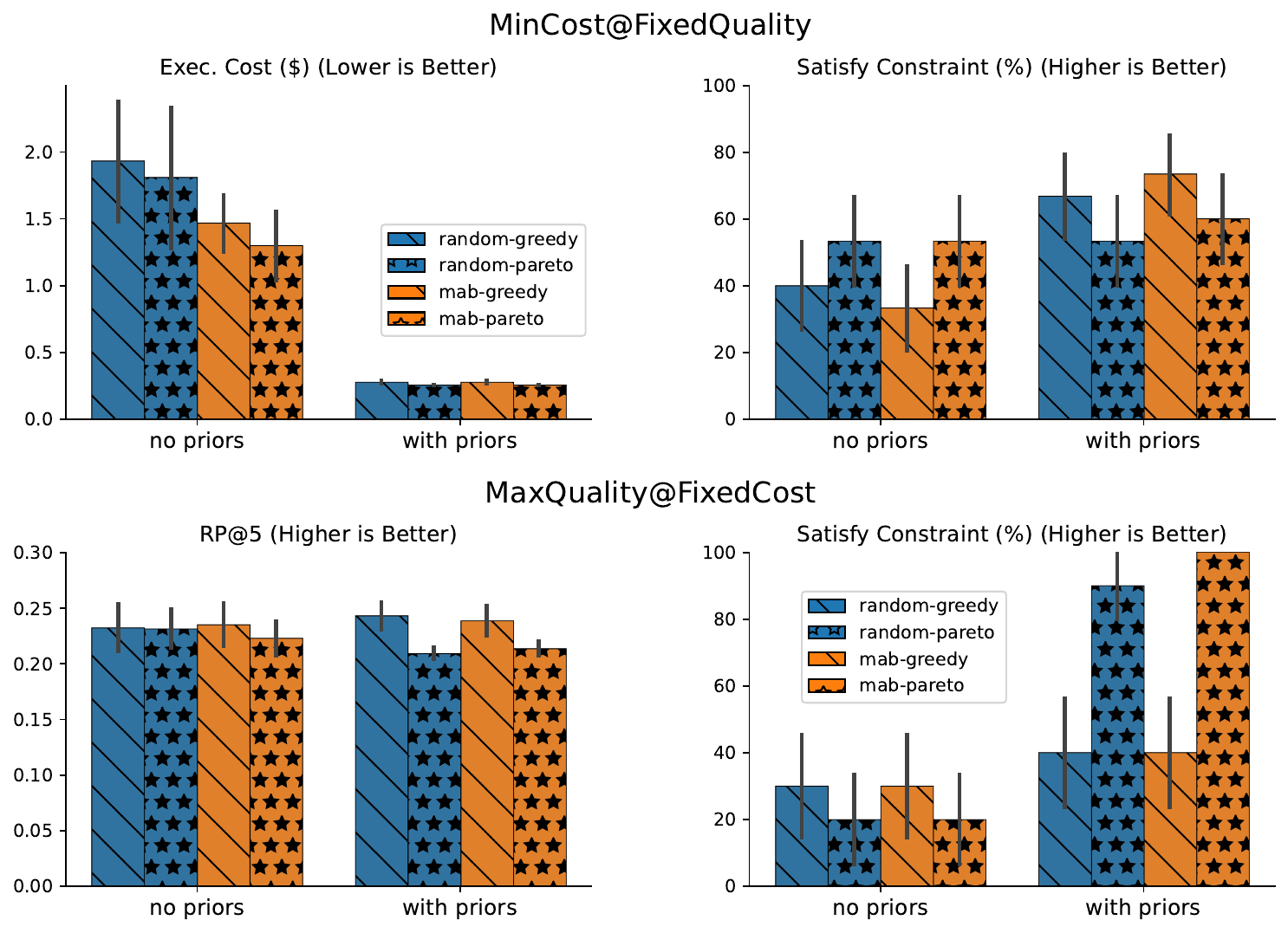}
    \caption{Ablation study isolating the effects of the Pareto-Cascades and MAB sampling algorithms and prior beliefs.}
    \Description{Ablation study isolating the effects of the Pareto-Cascades and MAB sampling algorithms and prior beliefs on BioDEX. The objective was to minimize cost at a quality constraint of 80\% of \system{}'s mean performance in \Cref{tab:sys-comparison}.}
    \label{fig:ablation}
\end{figure}
%%%%%%%%%%%%%%%%%%%%%
For our third experimental claim, we used \system{} to optimize plans for BioDEX and CUAD with the objective of maximizing quality subject to a cost constraint. We varied the cost constraint from unconstrained optimization down to \$1, which is 11.8\% and 16.2\% of the median cost of an unconstrained plan on BioDEX and CUAD, respectively. Tightening the cost constraint limits the space of systems available to \system{}. Thus, our goal was to demonstrate that \system{} responds to looser constraints by identifying more optimal plans, or vice-versa, that \system{} responds to tighter constraints with non-trivial system implementations.

For each cost constraint, we used \system{} to optimize plans for maximum quality with 10 different test splits of the BioDEX and CUAD datasets. We used the same sample budget at each cost constraint and optimized with and without prior beliefs. The results of our evaluation are shown in \Cref{fig:cost-thresholds}. For optimization without prior beliefs, \system{} is generally able to identify plans which achieve better quality as the cost constraint is relaxed. Furthermore, \system{} still identifies plans which achieve non-trivial performance when optimizing under tight constraints.

When optimizing with priors, we see a smaller degradation in performance as the constraint is tightened. For example, without priors, performance on BioDEX decreases by 45.6\% from having no constraint to a constraint of \$1. However, with priors it only decreases by 12.5\% at its lowest point with a constraint of \$4. This is due to the fact that prior beliefs on operator performance help guide \system{}'s MAB sampling algorithm to prioritize operators which lie on the entire Pareto frontier of the cost vs. quality trade-off.

\subsection{Ablation Study}
Finally, we ran an ablation study to isolate the benefits of Pareto-Cascades, our MAB sampling algorithm, and prior beliefs. We ran \system{} on the BioDEX benchmark with the default values of $k$ and $j$ and a sample budget of 150 samples. We executed two policies: minimizing cost with a lower bound on quality (\Cref{fig:ablation}, top row) and maximizing quality with an upper bound on cost (\Cref{fig:ablation}, bottom row). To make the optimization challenging, we set the quality and cost constraints equal to 80\% and 50\% of \system{}'s mean quality and cost in \Cref{tab:sys-comparison}, respectively.

When running \system{} without Pareto-Cascades, we replaced it with a modified traditional Cascades algorithm that greedily selects the optimal subplan for each group which does not violate the constraint. This algorithm represents a greedy scheme in which the plan is built myopically without considering how subsequent operators may need to satisfy the constraint. When running \system{} without the MAB sampling algorithm, we replaced it with a random sampling algorithm that sampled $k$ physical operators and $j = B / k$ inputs. In theory, this algorithm should suffer from an inability to use some of its sample budget to search for new operators, which is a key benefit of the MAB approach.

The results of our ablation are shown in \Cref{fig:ablation}. First, on-average, prior beliefs help \system{} optimize the objective 3.5x better and satisfy the constraint 1.8x more often. Second, we can see the benefits of the MAB algorithm in the cost minimization objective, where \system{} with the MAB algorithm minimizes cost by 1.4x and 1.3x for Pareto-Cascades and greedy optimized plans, respectively. These benefits do not show up as clearly in the quality maximizing objective; this is in part because estimating quality is more difficult and may require more samples for this query. Third, on-average, the Pareto-Cascades algorithm helps identify plans which satisfy constraints 1.2x more often.
\section{Limitations and Future Work}
\label{sec:limitations}

\textbf{Modeling Operator Dependencies.} One limitation of \system{}'s cost model is that it treats operator performance as being independent of the other operators in the plan. This assumption enables \system{} to produce cost estimates for logical plans which it has not sampled. However, it can also lead to errors in estimation. In future work we plan to explore using techniques from Bayesian optimization \cite{garnett2023bayesian}, which have recently been applied in similar declarative programming frameworks \cite{khattab2023dspycompilingdeclarativelanguage}.

\textbf{Sequential MAB Sampling.} In \system{}' implementation of \Cref{algo:mab-sampling}, each operator frontier is updated in sequence in order to determine the new highest quality operator(s) on the frontier. The sample output(s) generated by the highest quality operator(s) are then used as input to the next operator frontier in Line 7 of \Cref{algo:abacus}. A downside of this implementation is that the optimization algorithm is slowed  by the sequential processing of operator frontiers (even though operators within a frontier may process sample inputs in parallel). In the future, we will explore pipelining  execution of operator frontiers to decrease  optimization overhead. 

\section{Related Work}

\textbf{Optimizing Semantic Operator Systems.} Recent work has investigated the optimization of semantic operator systems \cite{liu2025palimpzest, patel2025semanticoperatorsdeclarativemodel, shankar2024docetlagenticqueryrewriting, aryn2025, lu2025vectraflow, urban2023caesuralanguagemodelsmultimodal, evadb2023, lin2024accurateefficientdocumentanalytics, galois2025satriani}. Prior to \system{}, Palimpzest \cite{liu2025palimpzest} used sampling and heuristics to estimate and optimize semantic operator systems. LOTUS \cite{patel2025semanticoperatorsdeclarativemodel} optimizes semantic join, filter, group-by, and top-k operators by offloading data processing from an expensive ``gold algorithm" to a cheaper proxy method while providing statistical guarantees on quality with respect to the gold algorithm. DocETL \cite{shankar2024docetlagenticqueryrewriting} uses LLMs to apply (and validate) query rewrites to data processing pipelines. In contrast to \system{}, LOTUS and DocETL only optimize for system quality and do not satisfy explicit constraints on system cost or latency.

% Instead, it relies on the developer to manually tune cascade thresholds.
% For semantic joins and filters, it processes a sample of operator inputs with a ``gold algorithm" and a cheap proxy. It then tunes thresholds on the proxy scores to implement a model cascade which offloads most data processing to the cheap proxy, while providing statistical guarantees on the quality of the cascade relative to the gold algorithm.
% is the core framework into which we have integrated \system{}. However, Palimpzest has been updated since its most recent paper; it now supports the operators described in \Cref{tab:operators} and uses the optimization framework described in this paper. LOTUS supports a similar set of operators to those in \Cref{tab:operators}, but it also supports semantic joins.
% In contrast to \system{}, LOTUS' will not explicitly optimize its operators to satisfy constraints on system cost and latency.

% DocETL \cite{shankar2024docetlagenticqueryrewriting} provides developers with the ability to author data processing pipelines in YAML. Like LOTUS and Palimpzest, it supports a similar set of operators to those in \Cref{tab:operators}, but it also introduces two new operators for entity resolution and context management of operator inputs. DocETL optimizes its pipelines through query rewrites. An optimizer LLM agent applies query rewrite directives to modify the current state of the pipeline, and a validator LLM agent evaluates the effect of these rewrites on sampled inputs. 

Similar to DocETL, Aryn \cite{aryn2025} uses an LLM to apply rewrites to query plans, however it focuses more on using a human-in-the-loop to validate the plans it generates. VectraFlow \cite{lu2025vectraflow} built a stream processing engine with support for vector data and vector-based operations. Galois \cite{galois2025satriani} introduced new logical and physical optimizations for answering queries with LLM-based operators.

Early work on semantic operators focused on adding machine learning classifiers to data systems for tasks such as image classification, object detection, sentiment analysis, and more \cite{Anderson2018PhysicalRP, kang2017noscope, kang2019blazeit, russo2023inquest, kang2021abae, kossmann2023etl}. Caesura \cite{urban2023caesuralanguagemodelsmultimodal}, EVA \cite{evadb2023}, and ZenDB \cite{lin2024accurateefficientdocumentanalytics} integrated semantic operators into systems which support SQL queries over multi-modal, video, and text data, respectively. In general, these systems support narrow optimizations over semantic operators, rather than building a new general-purpose optimizer for them.

% As discussed in \Cref{sec:intro}, an overarching theme among these frameworks is that their designs are tightly-coupled with their optimizers, thus limiting their ability to optimize the broadest possible set of systems. In contrast to these approaches, \system{} decouples the optimizer from its host programming framework with an interface of implementation and transformation rules. The programming framework implements a rule for each optimization it wants to employ, and then \system{} can apply these rules to optimize the agentic system. Furthermore, \system{} is able to optimize systems of semantic operators with respect to constraint(s) on system quality, cost, and latency, whereas prior work is largely limited to unconstrained single-objective optimization.

\textbf{Optimizing More General AI Systems.}
There is also a large body of work on building AI systems that go beyond using semantic operators. We focus our discussion on frameworks which treat the optimization of these systems as a primary challenge. DSPy \cite{khattab2023dspycompilingdeclarativelanguage, opsahlong2024optimizinginstructionsdemonstrationsmultistage, soylu2024finetuningpromptoptimizationgreat} enabled users to construct and optimize ``language model programs", i.e. workflows composed of modular operators which can be optimized in a declarative manner. The main levers of optimization included prompt optimization, parameter optimization, and model finetuning. More recently, AFlow \cite{zhang2025aflow, hong2024metagpt}, Archon \cite{saadfalcon2024archonarchitecturesearchframework}, and ADAS \cite{hu2025automated} all explored automatically constructing AI systems for a given workload. Each of these systems searches over a space of computation graphs and operator implementations, using Monte Carlo Tree Search, Bayesian optimization, and LLM-guided search, respectively. In contrast to these works, \system{} focuses solely on declarative optimization of semantic operator systems.

\textbf{Relational Query Optimization.}
There is a long and rich literature on query optimization in relational database systems \cite{selinger, starburst, volcano, Graefe1995TheCF, Neumann2017TheCS, neumann2018adaptive}. From this line of work, \system{} most closely resembles a Cascades optimizer \cite{Graefe1995TheCF}. There are two key challenges which make optimizing semantic operator systems different from optimizing relational queries. First, the quality of a semantic operator is not guaranteed to be perfect. Thus, \system{} must be able to estimate the quality of an operator, possibly without the use of precomputed statistics. Second, in order to support constrained optimization \system{} cannot rely on the principle of optimality to prune sub-plans during its plan search. This necessitates \system{}'s use of a new dynamic programming algorithm which maintains the Pareto frontier of physical plans for every subplan in its search.
\section{Conclusion}
We present \system{}, an extensible, cost-based optimizer for semantic operator systems which optimizes their quality, cost, and latency. \system{} modifies traditional multi-armed bandit and Cascades algorithms to overcome challenges in estimating the performance of semantic operators while supporting constrained optimization. We evaluate \system{} and its core algorithmic contributions on a diverse set of benchmarks. We demonstrate that its plans perform better than those produced by recent work, and its algorithmic contributions help decrease the number of samples required to achieve good performance and help satisfy optimization constraints.

% A key challenge for \system{} is determining how to sample and estimate the performance of a large space of semantic operators with a specific optimization goal in mind. To this end, \system{} introduces two key algorithmic contributions. The first is a multi-armed bandit algorithm for sampling physical operators which is designed to search for a Pareto frontier of operators as opposed to a single optimal operator. Notably, this algorithm can make use of prior beliefs on operator performance to vastly accelerate its search. The second is a new Cascades algorithm which can account for constraints when performing its dynamic programming search for the optimal physical plan. We evaluate \system{} on a diverse set of benchmarks and demonstrate that it optimizes plans to perform better than those produced by recent work. Finally, we examine the core algorithmic contributions of \system{} and demonstrate that they are crucial for (1) minimizing the number of samples required to achieve good performance and (2) satisfying optimization constraints.

\begin{acks}
We are grateful for the support from the DARPA ASKEM Award HR00112220042, the ARPA-H Biomedical Data Fabric project, NSF DBI 2327954, a grant from Liberty Mutual, a Google Research Award, and the Amazon Research Award. Additionally, our work has been supported by contributions from Amazon, Google, and Intel as part of the MIT Data Systems and AI Lab (DSAIL) at MIT, along with NSF IIS 1900933. This research was sponsored by the United States Air Force Research Laboratory and the Department of the Air Force Artificial Intelligence Accelerator and was accomplished under Cooperative Agreement Number FA8750-19-2-1000. The views and conclusions contained in this document are those of the authors and should not be interpreted as representing the official policies, either expressed or implied, of the Department of the Air Force or the U.S. Government. The U.S. Government is authorized to reproduce and distribute reprints for Government purposes notwithstanding any copyright notation herein.
\end{acks}

%\clearpage

\bibliographystyle{ACM-Reference-Format}
\bibliography{sample}

\end{document}